\documentclass[
  aps, 
  prl, 
  reprint,
  nobibnotes 
]{revtex4-2}

\usepackage{amssymb}
\usepackage{amsmath}
\usepackage{mathtools}

\usepackage{amsthm}
\newtheorem{theorem}{Theorem}

\usepackage{adjustbox}

\usepackage[hidelinks]{hyperref}

\usepackage{tikz}
\usetikzlibrary{
  cd,
  arrows.meta,
  backgrounds,
  snakes
}

\tikzcdset{
  arrow style=tikz, 
  diagrams={
    >={
      Computer Modern Rightarrow[
        length=4pt, width=4pt
      ]
    },
    shorten=0pt
  }
}

\definecolor{darkblue}{rgb}{0.05,0.25,0.65}
\definecolor{darkgreen}{RGB}{20,140,10}
\definecolor{lightgray}{rgb}{0.9,0.9,0.9}
\definecolor{darkorange}{RGB}{200,100,5}
\definecolor{darkyellow}{rgb}{.91,.91,0}
\definecolor{lightolive}{RGB}{225, 220, 185}


\newcommand{\acts}{\hspace{-.5pt}\raisebox{1.4pt}{\;\rotatebox[origin=c]{90}{$\curvearrowright$}}\hspace{.6pt}}

\newcommand{\GeneralBrillouinTorus}[1]{\widehat{T}^{#1}}

\newcommand{\BrillouinTorus}{\GeneralBrillouinTorus{2}}

\newcommand{\ValenceBundle}{\mathcal{V}}

\newcommand{\HilbertSpace}{\mathcal{H}}

\newcommand{\ChernNumber}{C}

\newcommand{\ActedIndexedOperator}[4]{          {#1}
          _{\hspace{-1.5pt}
           \scalebox{.6}{${#4}\hspace{-2pt}
           \left[
           \def\arraystretch{.8}
           \def\arraycolsep{0pt}
           \begin{array}{c}
            {#2} \\ #3
            \\[-12pt]
            {}
           \end{array}
           \right]
           $}
        }
}

\newcommand{\ActedWOperator}[3]{
  \ActedIndexedOperator{\widehat{W}}{#1}{#2}{#3}
}

\newcommand{\WOperator}[2]{
  \ActedWOperator{#1}{#2}{}
}

\newcommand{\ZetaOperator}{\widehat{\rule{0pt}{6pt}\smash{\zeta}}}

\begin{document}


\author{Hisham Sati}
\email{hsati@nyu.edu}
\affiliation{Center for Quantum and Topological Systems, Division of Science, New York University, Abu Dhabi, UAE}

\author{Urs Schreiber}
\email{us13@nyu.edu}
\affiliation{Center for Quantum and Topological Systems, Division of Science, New York University, Abu Dhabi, UAE}

\title{
  Identifying Anyonic Topological Order in
  \\
  Fractional Quantum Anomalous Hall Systems
}

\begin{abstract}
  Recently observed fractional quantum \emph{anomalous} Hall materials (FQAH) are candidates for topological quantum hardware, but their required anyon states are elusive.
  We point out dependence on monodromy in the \emph{fragile} band topology in 2-cohomotopy. An algebro-topological theorem of Larmore \& Thomas (1980) then identifies FQAH anyons over momentum space. Admissible braiding phases are $2\ChernNumber$-th roots of unity, for $C$ the Chern number.
  This lays the foundation for understanding symmetry-protected topological order in FQAH systems, reducing the problem to computations in equivariant cohomotopy.
\end{abstract}

\date{\today}

\maketitle


\section{Motivation and Introduction}

The holy grail of quantum materials research is arguably
(cf. \cite{Wang2020}) the understanding and manipulation of:
\begin{itemize}

\item[(i)] crystalline {\bf topological phases} of matter --- where electron Bloch states in gapped valence bands span topologically non-trivial vector bundles over Brillouin tori $\GeneralBrillouinTorus{d}$ of crystal momenta (cf. \cite{Sergeev2023}\cite[\S 2]{SS22-Ord}),

\item[(ii)] exhibiting {\bf topological order} --- where the system's ground state  is degenerate and gapped, whence effectively (far above the lattice scale) that of a topological field theory (cf. \cite{Stanescu2020}\cite{Simon2023}),

\item[(iii)] hosting {\bf anyons} --- whence homotopy classes of adiabatic deformations of the material, notably the \emph{braiding} of soliton worldlines,
act on these ground states as unitary Berry phases (cf. \cite[\S 3]{MySS2024}\cite[\S 3]{SS22-Ord}).
\end{itemize}

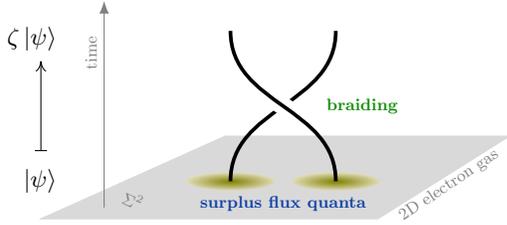
\begin{figure}[htbp]
  \centering
  \begin{tikzpicture}[
    xscale=.7,
    baseline=(current bounding box.center)
  ]
    \draw[
      gray!30,
      fill=gray!30
    ]
      (-4.6,-1.5) --
      (+1.8,-1.5) --
      (+1.8+3-.5,-.4) --
      (-4.6+3+.5,-.4) -- cycle;

    \begin{scope}[
      shift={(-1,-1)},
      scale=1.2
    ]
    \shadedraw[
      draw opacity=0,
      inner color=olive,
      outer color=lightolive
    ]
      (0,0) ellipse (.7 and .1);
    \end{scope}

    \draw[
     line width=1.4
    ]
      (-1,-1) .. controls
      (-1,0) and
      (+1,0) ..
      (+1,+1);

  \begin{scope}
    \clip 
      (-1.5,-.2) rectangle (+1.5,1);
    \draw[
     line width=7,
     white
    ]
      (+1,-1) .. controls
      (+1,0) and
      (-1,0) ..
      (-1,+1);
  \end{scope}
  
    \begin{scope}[
      shift={(+1,-1)},
      scale=1.2
    ]
    \shadedraw[
      draw opacity=0,
      inner color=olive,
      outer color=lightolive
    ]
      (0,0) ellipse (.7 and .1);
    \end{scope}
    \draw[
     line width=1.4
    ]
      (+1,-1) .. controls
      (+1,0) and
      (-1,0) ..
      (-1,+1);

  \node[
    rotate=-25,
    scale=.7,
    gray
  ]
    at (-2.85,-1.25) {
      $\Sigma^2$
    };

  \draw[
    -Latex,
    gray
  ]
    (-3.4,-1.35) -- 
    node[
      near end, 
      sloped,
      scale=.7,
      yshift=7pt
      ] {time}
    (-3.4, 1.4);

  \node[
    scale=.7
  ] at 
    (0,-1.3)
   {\bf \color{darkblue} surplus flux quanta};

  \node[
    scale=.7
  ] at 
    (1.5,0)
   {\bf \color{darkgreen} braiding};

\node [
 rotate=34,
 scale=.7
] at (3.15,-1.05) {
  \color{gray}
  2D electron gas
};

\node at (-4.3,+.9) {%
  \llap{$\zeta \, \vert \psi \rangle$}%
};

\draw[
  line width=.4,
  >={
    Computer Modern Rightarrow[
      length=4pt, width=4pt
    ]
  },
  |->,
]
  (-4.6, -.6) --
  (-4.6, +.6);

\node at (-4.3,-1) {%
  \llap{$\vert \psi \rangle$}%
};

\end{tikzpicture}

\caption{
  In an FQH system at unit \emph{filling fraction} $1/K$, each electron in the effectively 2-dimensional material is ``bound'' to $K$ quanta of transverse magnetic flux, as an effective result of strong interaction. On this backdrop, each surplus flux quantum is like the lack of $1/K$th of an electron, and as such called a \emph{quasi-hole}.   Just these quasi-hole surplus flux quanta are the (abelian) anyons in FQH systems (cf. \cite{Stormer99}), in that the quantum state of the system picks up a \emph{braiding phase} $\zeta \in \mathbb{C}^\times$ (a root of unity) when a pair of these swap positions.
}
\label{FigureA}
\end{figure}

Among candidates, \emph{fractional quantum Hall systems} (FQH, cf. \cite{Stormer99} \& Fig. \ref{FigureA}) stand out in that their (abelian) anyons are actually being observed in recent years, by independent groups and across different platforms (starting with \cite{Nakamura2020}, recent pointers in \cite{Veillon2024}). This makes FQH materials a candidate hardware for much anticipated topological quantum computers \cite{Stanescu2020}, plausibly necessary for future quantum computing of utility value.


While the large external magnetic fields required for typical FQH systems may hinder their practicality as hardware, it has rather recently been confirmed for various materials
\cite{Cai2023}\cite{Zeng2023}\cite{Park2023}\cite{Lu2024}
that there also exist ``anomalous'' forms of FQH materials (FQAH), where the role of the external magnetic field is instead played by magnetic properties intrinsic to crystalline topological phases of matter called a \emph{fractional Chern insulators} (FCI, \cite{Regnault2011}\cite{Neupert2015}, cf. Fig. \ref{FigureB}).

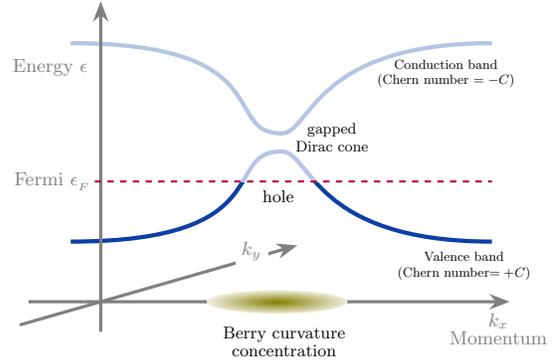
\begin{figure}[h]
\adjustbox{
  scale=0.8
}{
\begin{tikzpicture}[
    >=Stealth,
]

\draw[
  line width=2,
  darkblue
]
  (-.5,-2) .. controls
  (3,-2) and
  (2,-.5) ..
  (3, -.5) .. controls
  (3.5,-.5) and
  (3.5,-2) ..
  (6.5,-2);

\begin{scope}[
  shift={(0,-.7)},
  yscale=-1
]
\draw[
  line width=2,
  darkblue
]
  (-.5,-2) .. controls
  (3,-2) and
  (2,-.5) ..
  (3, -.5) .. controls
  (3.5,-.5) and
  (3.5,-2) ..
  (6.5,-2);
\end{scope}

\draw[
  draw opacity=0,
  fill=white,
  fill opacity=.7
]
  (-.5,-1) rectangle
  (6.5,1.5);
  
\draw[
  ->,
  gray,
  line width=1.5
]
  (0,-3.5) --
    node[
      pos=.8,
      xshift=-3pt
    ] {
      \llap{
        Energy 
        $\epsilon$
      }
    }
  (0,2);

\draw[
  ->,
  gray,
  line width=1.5
]
  (-1.2,-3) --
    node[
      pos=1.1,
      yshift=-14pt
    ] {
      \llap{
        \def\arraystretch{.7}
        \begin{tabular}{c}
          $k_x$
          \\
          Momentum 
        \end{tabular}
      }
    }
  (6.8,-3);

\draw[
  ->,
  gray,
  line width=1.5,
  shift={(0,-3)}
]
  (16:-1.4cm) --
    node[
      pos=.85,
      shift={(-2pt,+3pt)}
    ]{
      \colorbox{white}{$k_y$}
    }
  (16:+3.4cm);

\draw[
  line width=1,
  color=purple,
  dashed
]
  (-.1,-1) --
  (6.5,-1);
\node[
  gray
] at  (-0.05,-1) {
  \llap{
  Fermi $\epsilon_{{}_{F}}$
  }
};

\node[scale=.7] at 
  (6,-2.4) {
    \def\arraystretch{1}
    \begin{tabular}{c}
      Valence band
      \\
      (Chern number$=+C$)
    \end{tabular}
  };

\node[scale=.7] at 
  (5.7,+.8) {
    \def\arraystretch{1}
    \begin{tabular}{c}
      Conduction band
      \\
      (Chern number $= -C$)
    \end{tabular}
  };

\node[
  scale=.8
] at (3.8,-.3) {
  \def\arraystretch{1}
  \begin{tabular}{c}
    gapped 
    \\
    Dirac cone
  \end{tabular}
};

\node[
  scale=.9
] at 
  (2.95,-1.25) {hole};

\shadedraw[
  draw opacity=0,
  inner color=olive,
  outer color=lightolive!50,
  shift={(2.93,-3)},
  scale=.65
]
  (0,0) ellipse (1.8 and .3);

\node[
  scale=.9
] at (3,-3.7) {
  \def\arraystretch{.9}
  \begin{tabular}{c}
    Berry curvature 
    \\
    concentration
  \end{tabular}
};

\end{tikzpicture}
}
\caption{
  Fractional Chern insulators (FCI) exhibiting  
  a fractional quantum anomalous Hall effect (FQAH) are typically realized 
  \cite[\S II.A]{Chang2023}
  by gapping Dirac cones of 2D 2-band systems.
}
\label{FigureB}
\end{figure}

Concretely, the hallmark properties of anomalous Hall systems are all due to the existence of non-vanishing \emph{Berry curvature}
$\Omega(\vec k)$ over the crystal's Brillouin torus of Bloch momenta,  $ \vec k \in \BrillouinTorus$  (cf. \cite{Sinitsyn2007}\cite{Nagaosa2010}\cite[\S III.D]{Xiao2010}\cite[\S 5.1]{Vanderbilt2018}), a measure for the electron state's dependence on their crystal momenta which, takes the role of magnetic flux density $F(\vec x)$ in the space of positions $\vec x$ inside FQH systems. 

This analogy is quite strong; we suggest to think of it as a \emph{duality} (cf. Fig. \ref{PositionMomentumDuality}): The \emph{Lorentz force} law which underlies the ordinary Hall effect and the \emph{anomalous velocity law} which underlies the anomalous Hall effect are mirror images of each other under exchanging position space with momentum space and magnetic flux density with Berry curvature (an observation that goes back to \cite{Chang1995} \cite{Chang1996}\cite[\S III.B]{Sundaram1999}, 
review in \cite{Xiao2005}\cite[(69)]{Sinitsyn2007}\cite[\S III.A]{Xiao2010}\cite{Chong2010}\cite[(12-13)]{Stephanov2012}), see Fig. \ref{PositionMomentumDuality}. This gives a good understanding of the anomalous Hall effect as a Hall effect in momentum space.

\begin{figure}[h]
\adjustbox{
  scale=0.9,
  raise=-1.6cm,
  fbox
}{
\begin{tikzpicture}

\node[scale=.7] at (-2.9,1.2) {%
  \color{gray}%
  \bf%
  \def\arraystretch{.9}%
  \begin{tabular}{@{}c@{}}%
    Semiclassical 
    \\
    equation of motion
    \\
    of crystal electrons
  \end{tabular}
};

\node at (0,0) {
\begin{tikzcd}[
  column sep=-2pt,
  row sep=10pt
]
  \dot {\vec k}
  &=&
  -
  \partial \epsilon
  /
  \partial\vec x
  &+&
  \overset{
    \mathclap{
      \adjustbox{
        scale=.7,
        raise=2pt
      }{
        \color{darkblue}
        \bf
        Lorentz force
      }
    }
  }{
    \vec x \,\times\, e\vec B
  }
  &\phantom{---}&
  \overset{
    \mathclap{
      \adjustbox{
        scale=.7,
        raise=4pt
      }{
        \color{darkblue}
        \bf
        \def\arraystretch{.9}
        \begin{tabular}{c}
          position
        \end{tabular}
      }
    }
  }{
    \vec x
  }
  \ar[
    <->,
    shorten=-2pt,
    dashed,
    gray,
    d
  ]
  &\phantom{---}&
  \overset{%
    {%
      \adjustbox{
        scale=.7,
      }{%
        \color{darkblue}%
        \bf%
        \def\arraystretch{.9}%
        \begin{tabular}{@{}c@{}}%
          magnetic
          \\
          flux density
        \end{tabular}%
      }%
    }%
  }{
    e \vec B
  }
  \ar[
    <->,
    shorten=-2pt,
    dashed,
    gray,
    d
  ]
  \\
  \dot {\vec x}
  &=&
  \phantom{+}
  \partial \epsilon
  /
  \partial\vec k
  &-&
  \underset{
    \mathclap{
      \adjustbox{
        scale=.7,
        raise=-2pt
      }{
        \color{darkblue}
        \bf
        anomalous velocity
      }
    }  
  }{
    \vec k \,\times\, \phantom{e}\vec\Omega
  }
  &&
  \underset{
    \mathclap{
      \adjustbox{
        scale=.7,
        raise=-3pt
      }{
        \color{darkblue}
        \bf
        \def\arraystretch{.9}
        \begin{tabular}{c}
          momentum
        \end{tabular}
      }
    }
  }{
    \vec k
  }
  &&
  \underset{
    \mathclap{
      \adjustbox{
        scale=.7,
        raise=-6.5pt
      }{
        \color{darkblue}
        \bf
        \def\arraystretch{.85}
        \begin{tabular}{c}
          Berry 
          \\
          curvature
        \end{tabular}
      }
    }
  }{
    \vec \Omega
  }
\end{tikzcd}
};
\begin{scope}[
  shift={(-2.3,0)}
]
\draw[
  <->,
  densely dashed
]
  (2.25,.5) .. controls
  (3.2,.5) and
  (3.2,-.5) ..
  (2.25,-.5);
\end{scope}

\node[scale=.8] at (2.6,1.5) {
  \color{darkorange}
  \bf
  Hall effect
};
\node[scale=.8] at (2.8,-1.5) {
  \color{darkorange}
  \bf
  \llap{Anomalous }Hall effect
};

\end{tikzpicture}
}
\caption{
  There is a duality between the FQH effect and its anomalous FQAH version, under which the ordinary space of positions inside the 2D material is exchanged for the ``reciprocal'' space of crystal momenta, while the external magnetic flux density is exchanged for the Berry curvature of the Bloch bands over this momentum space.
}
\label{PositionMomentumDuality}
\end{figure}
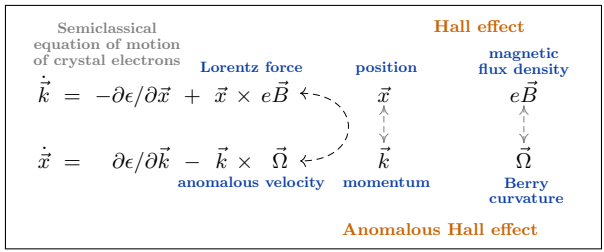
This duality is particularly remarkable for our purpose, because the full definition of flux in position space is 
\begin{itemize}
\item[(i.)] \cite{SS25-Flux} known  to require choices of \emph{flux-quantization laws} which exhibit the total flux -- hence dually: the Chern number \eqref{TheChernNumber} -- as a ``rationalized'' shadow of maps to a classifying space, cf. \eqref{TheChernNumber}, while
\item[(ii.)] \cite{SS25-FQH} exotic (below: \emph{fragile}) flux-quantization laws were recently understood to control the existence and nature of anyonic solitons in FQH systems.
\end{itemize}

Just that identification of anyonic solitons in the dual FQAH systems had remained wide open but needs addressing for these systems to serve as topological quantum hardware.

We now present a resolution by:
\begin{itemize}
\item[(i.)] pointing out that the \emph{fragile} band topology of (fractional) Chern insulators (while often conflated with the \emph{stable} band topology) influences the adiabatic Berry holonomy \eqref{AdiabaticHolonomy} acting on the ground states,

\item [(ii.)] computing the effect, verifying that it exhibits anyonic topological order over the Brillouin torus (Thm. \ref{DerivingTorusAlgebra} below).
\end{itemize}

We use basics of algebraic topology 
\cite{Hatcher2002}; for exposition in our context see \cite{Sergeev2023}\cite[\S A.3]{SS25-FQH}\cite[\S A]{SS25-Srni}\cite[\S A.2]{SS24-Obs}.

\section{Methods and Results}

First to note that generic fractional Chern insulators are (cf. \cite[\S II.A.1]{Chang2023}) 2D 2-band systems
whose Bloch Hamiltonian over the Brillouin torus $\BrillouinTorus$ of crystal momenta may be expanded in Pauli matrices $\{\sigma_i\}_{i = 1}^3$ as:
\begin{equation}
\label{BlochHamiltonian}
\begin{aligned}
    H_{\mathrm{Blch}}
    & :  
  \begin{tikzcd}[
    ampersand replacement=\&
  ]
    \BrillouinTorus
    \ar[rr]
    \&\&
    \mathrm{Mat}_{2 \times 2}(\mathbb{C})
  \end{tikzcd}
  \\
  H_{\mathrm{Blch}}(\vec k)
  & =
  h_0(\vec k) 
    + 
  \underbrace{
  \textstyle{\sum_{i = 1}^3}
  h_i(\vec k)
  \,
  \sigma_i
  }_{
    H(\vec k)
  }
  \,,
  \;\;\;
  h_{{}_{(-)}}(-) \in \mathbb{R}
  \,,
\end{aligned}
\end{equation}
where the system is \emph{gapped} in that the difference of the two eigenvalues is non-vanishing:
\begin{equation}
  \big\vert H(\vec k) \big\vert
  \;:=\;
  \sqrt{
    \textstyle{\sum_{i=1}^3}
    h_i(\vec k)
  }
  \;\;
  >
  \;\;
  0
  \,,
\end{equation}
whence the \emph{valence bundle} $\mathcal{V}$ of lower eigenspaces (cf. Fig. \ref{FigureB}) is the kernel bundle over $\BrillouinTorus$ of the projectivized operators:
\begin{equation}
  \label{TheValenceBundle}
  \ValenceBundle
  \;\simeq\;
  \mathrm{ker}\big(
    \mathrm{id} + H/\vert H \vert
  \big)
  \,,
\end{equation}
which shows that the valence bundle is exactly encoded in a map $\vec h/\vert H \vert$ to the 2-sphere (cf. \cite[(8.3-4)]{Sergeev2023}):
\begin{equation}
  \label{TheMapToTheSphere}
  \begin{tikzcd}
    \BrillouinTorus
    \ar[
      rr,
      "{
        \vec h/\vert H \vert
      }"
    ]
    \ar[
      rrrr,
      rounded corners,
      to path={
        (\tikztostart.north)
        -- ++(0, +6pt)
        -| 
        node[
          pos=.27,
        ] {
          \adjustbox{
            bgcolor=white,
            scale=.7
          }{
            $H/\vert H \vert$
          }
        }
        ([yshift=+00pt]\tikztotarget.north)
      }
    ]
    &&
    S^2
    \ar[
      rr,
      "{
        \sigma
      }",
      "{
        \vec x 
        \;\mapsto\;
        \sum_i x^i \sigma_i
      }"{swap}
    ]
    &&
    \mathrm{Mat}_{2\times 2}(\mathbb{C})
    \,.
  \end{tikzcd}
\end{equation}

More concretely, the lower eigenspace bundle of (what we may identify as the ``universal normalized 2-band Bloch Hamiltonian'', denoted) $\sigma$ in \eqref{TheMapToTheSphere} is the \emph{tautological} complex line bundle on the 2-sphere, the one with unit Chern class class (this is \cite[Lem. 4.5.12]{SS25-Bund}):
\begin{equation}
  \mathrm{ker}\big(
    \mathrm{id} + \sigma
  \big)
  \;=\;
  1 
  \;\in\;
  \mathbb{Z}
  \;\simeq\;
  H^2\big(
    S^2;\,
    \mathbb{Z}
  \big)
  \,,
\end{equation}
whence it is the pullback of the universal complex line bundle on infinite projective space along the canonical inclusions (cf. \cite[\S 3.8]{SS23-MF}):
\begin{equation}
  \label{SequenceOfProjectiveSpaces}
  \begin{tikzcd}[
    column sep=8pt,
    shorten=-2pt
  ]
    \iota
    :
    S^2 
    \simeq
    \mathbb{C}P^1
    \ar[r, hook]
    &
    \mathbb{C}P^2
    \ar[rr, hook, densely dotted]
    &&
    \mathbb{C}P^\infty
    \,:=\,
    \displaystyle{\bigcup_{\mathclap{N \to \infty}}}
    \;
    \mathbb{C}P^N
    \!,
  \end{tikzcd}
\end{equation}
 of complex projective \emph{Grassmannian} moduli spaces of complex lines (fibers of the valence bundle) inside a fixed $\mathbb{C}^{1+N}$ (the fiber of a trivial ambient Bloch bundle).

The infinite projective space $\mathbb{C}P^\infty$
\eqref{SequenceOfProjectiveSpaces} is the \emph{classifying space} for (complex line bundles and) integral 2-cohomology, meaning that the homotopy classes of maps into it, hence the connected components $\pi_0$ of the mapping space $\mathrm{Map}(-,-)$,  are (cf. \cite[Ex. 2.1]{FSS23-Char}):
\begin{equation}
  \pi_0 \, \mathrm{Map}\big(
    -
    ,\,
    \mathbb{C}P^\infty
  \big)
  \;\;
  \simeq
  \;\;
  H^2\big(
    -
    ;\,
    \mathbb{Z}
  \big)
  \,,
\end{equation}
and the corresponding class of the valence bundle \eqref{TheValenceBundle}, via \eqref{TheMapToTheSphere} and \eqref{SequenceOfProjectiveSpaces}, is its (first and only) \emph{Chern number} $C$:
\begin{equation}
  \label{TheChernNumber}
  \begin{tikzcd}[
    column sep=6pt,
    shorten=-2pt
  ]
    \Big[
    \BrillouinTorus
    \ar[
      rr,
      "{
        \frac
          {\vec h}
          {\rule{0pt}{4pt}\vert H \vert}
      }"{yshift=1pt}
    ]
    &&
    S^2
    \ar[r, hook]
    &
    \mathbb{C}P^\infty
    \Big]
  \,=\,
  C
  \,\in\,
  \mathbb{Z}
  \,\simeq\,
  H^2\big(
    \BrillouinTorus
    ;\,
    \mathbb{Z}
  \big)
  \,,
  \end{tikzcd}
\end{equation}
characterizing the \emph{topological phase} of the Chern insulator.

Now we come to the crucial subtlety not previously fully appreciated:

1. The valence bundle arises in \eqref{TheValenceBundle} as a sub-bundle of the trivial $\mathrm{rank}=2$ bundle and \emph{as such} is classified by the Grassmannian $\mathbb{C}P^1 \simeq S^2$, reflecting equivalence of topological phases under paths of deformations of the system small enough for the valence bundle not to mix with higher conduction bands. We note here that also $\mathbb{C}P^1 \simeq S^1$ classifies a cohomology theory, namely an  exotic ``nonabelian'' cohomology theory called \emph{2-cohomotopy} $\pi^2$ \cite[\S VII]{STHu59}\cite[Ex. 2.7]{FSS23-Char}:
\begin{equation}
  \pi_0
  \,
  \mathrm{Map}\big(
    -
    ;\,
    S^2
  \big)
  \;\;
  =:
  \;\;
  \pi^2(-)
  \,.
\end{equation}
Therefore, 
in the language of \cite{Bouhon2020} (following \cite{PoHarukiVishwanath2018}, cf. \cite{Bouhon2023})
2-cohomotopy classifies the \emph{fragile} band topology of generic FCIs: fragile, since its classification may in principle be broken by more drastic deformations.

2. But the class in \eqref{TheChernNumber} is that classified by $\mathbb{C}P^\infty$ 
\eqref{SequenceOfProjectiveSpaces}
and thus reflects equivalence under paths of exactly such more drastic deformations that may involve an arbitrary number of higher conduction bands, preserving only the \emph{stable} band topology.

3. This situation is subtle and prone to be underappreciated because, in the present case of 2D materials, it \emph{makes no difference for the topological charge}: The classical \emph{Hopf degree theorem} (cf. \cite[(35)]{FSS20-H}) says that the $n$-cohomotopy of closed $n$-manifolds coincides with their integral $n$-cohomology:
\begin{equation}
  \begin{tikzcd}[
    column sep=10pt, 
    shorten=-2pt,
    row sep=5pt
  ]
    \pi^2\big(
      \BrillouinTorus
    \big)
    \ar[
      rr,
      "{ \sim }"
    ]
    \ar[d, equals]
    &&
    H^2\big(
      \BrillouinTorus
      ;\,
      \mathbb{Z}
    \big)
    \ar[d, equals]
    \\
    \pi_0
    \mathrm{Map}\big(
      \BrillouinTorus
      ,\,
      S^2
    \big)
    \ar[
      rr,
      "{ \iota_\ast }"
    ]
    &&
    \mathrm{Map}\big(
      \BrillouinTorus
      ,\,
      \mathbb{C}P^\infty
    \big)
    \,,
  \end{tikzcd}
\end{equation}
whence authors may and do routinely address the Chern number \eqref{TheChernNumber} as the \emph{winding number} of the map $\vec h/ \vert H \vert$ \eqref{TheMapToTheSphere}, thereby tacitly conflating the fragile band topology in (what we identified as) $\pi^2$ with the stable band topology in $H^2(-;\mathbb{Z})$ --- which is harmless in the study of ordinary Chern insulators.

4. \emph{But} as we now turn attention to the fractional situation of Chern insulators exhibiting an FQAH effect and thus expected to exhibit anyonic topological order (\cite{Wen1991}\cite{SS22-Ord}), 
we highlight that the whole point is the possibility that --- while the external parameters of a system may adiabatically move through a closed path $\gamma$ of deformations, here: returning the deformation class $[\ValenceBundle]$ of the valence bundle back to itself --- the Hilbert space $\HilbertSpace_C$ of anyonic quantum ground states may be transformed by a non-trivial unitary $U_\gamma$ sensitive to (the homotopy class $[\gamma]$ of) the actual path (cf. \cite[\S II.A.2]{Nayak2008}\cite[\S 3]{MySS2024}).

In other words, the Hilbert space $\HilbertSpace_C$ of anyonic ground states observable under deformations that allow excursions (of the valence bundle $\ValenceBundle$ of Chern number $C$) into the first $N$ Bloch bands above the gap must be a representation $\rho$ of the fundamental group $\mathcolor{purple}{\pi_1}$ of maps to $\mathbb{C}P^N$:
\begin{equation}
  \label{AdiabaticHolonomy}
  \begin{tikzcd}[
    decoration=snake,
    sep=0pt
  ]
    \mathcolor{purple}{\pi_1}
    \Big(
    \mathrm{Map}\big(
      \BrillouinTorus
      ,\,
      \mathbb{C}P^N
    \big)
    ,\,
    C
    \Big)
    \ar[
      rr,
      "{
        \rho
      }"
    ]
    &&
    \mathrm{U}(\HilbertSpace_C)
    \\
    \hspace{3pt}%
      \mathclap{[\ValenceBundle]}%
    \hspace{3pt}%
    \ar[
      in=-39.5,
      out=180+39.5,
      looseness=10.4,
      decorate,
      -Latex,
      "{ \;\gamma }"{yshift=.5pt}
    ]
    &\longmapsto&
    \HilbertSpace
    \ar[
      in=-38,
      out=180+38,
      looseness=10,
      "{
        U_\gamma
      }"
    ]
  \end{tikzcd}
\end{equation}

Thereby, topological order is sensitive not just to the charge sector in $\pi_0$ of the mapping space, but also to $\pi_1$ of its connected components, and for that the difference between fragile and stable band topology makes all the difference --- this is our main result to be communicated here:

\noindent
\begin{theorem}\label{DerivingTorusAlgebra}
For fragile valence band topology, $N = 1$
\eqref{SequenceOfProjectiveSpaces}, the Hilbert spaces in \eqref{AdiabaticHolonomy} are modules over the non-commutative algebra
\begin{align}
  \label{TheHeisenbergGroupAlgebra}
  &
  \mathrm{Obs}_C
  \;\;
  :=
  \;\;
  \mathbb{C}\bigg[
    \pi_1
    \Big(
    \mathrm{Map}\big(
      \BrillouinTorus
      ,\,
      \mathcolor{purple}{S^2}
    \big)
    ,\,
    C
    \Big)
  \bigg]
  \\
  \notag
  &
  \simeq
  \Big\langle
    \WOperator{\mathcolor{purple}{1}}{0}
    ,
    \WOperator{0}{\mathcolor{purple}{1}}
    ,\,
    \underset{
      \mathclap{
        \adjustbox{
          scale=.6,
          rotate=40
        }
        {
          \llap{%
            \rm%
            central
            \hspace{-8pt}
          }
        }
      }
    }{
      \mathcolor{purple}{\ZetaOperator}
    }
    \;\,\Big\vert\;\,
    \WOperator{\mathcolor{purple}{1}}{0}
    \,
    \WOperator{0}{\mathcolor{purple}{1}}    
    \,=\,
    \mathcolor{purple}{\ZetaOperator}^{2}
    \,
    \WOperator{0}{\mathcolor{purple}{1}}    
    \,
    \WOperator{\mathcolor{purple}{1}}{0}
    ,\;
    \mathcolor{purple}{\ZetaOperator}^{2C} = 1
  \Big\rangle
  \,,
  \\[-11pt]
  \notag
\end{align}
where in each superselection sector (meaning: on irreducible representations) with 
$\ChernNumber \neq 0$ the operator $\ZetaOperator$
acts by multiplication with a $2\ChernNumber$-th root of unity,
\begin{equation}
  \label{TheRootOfUnity}
  \ZetaOperator
  \,
  \vert
  \psi \rangle
  \;=\;
  e^{ 
    \pi \mathrm{i} 
    \tfrac{p}{\ChernNumber} 
  }
  \,
  \vert \psi \rangle
  \,,
  \hspace{.5cm}
  p \in \big\{
    0, 1, \cdots, 
    \vert \ChernNumber\vert -1
  \big\}
  \,.
\end{equation}
On the other hand, for stable band topology, $N = \infty$, the central generator $\ZetaOperator$ disappears and the algebra becomes commutative with its two generators identified with a basis of torus 1-cycles:
\begin{align}
  \label{TheCommutativeAlgebra}
  &
  \mathbb{C}\bigg[
    \pi_1
    \Big(
    \mathrm{Map}\big(
      \BrillouinTorus
      ,\,
      \mathcolor{purple}{\mathbb{C}P^\infty}
    \big)
    ,\,
    C
    \Big)
  \bigg]
  \;\simeq\;
  \mathbb{C}\Big[
    H^1\big(
      \BrillouinTorus
      ;\,
      \mathbb{Z}
    \big)
  \Big]
  \\
  \notag
  &
  \simeq
  \Big\langle
    \WOperator{\mathcolor{purple}{1}}{0}
    ,
    \WOperator{0}{\mathcolor{purple}{1}}
    \;\,\Big\vert\;\,
    \WOperator{\mathcolor{purple}{1}}{0}
    \,
    \WOperator{0}{\mathcolor{purple}{1}}    
    \,=\,
    \WOperator{0}{\mathcolor{purple}{1}}    
    \,
    \WOperator{\mathcolor{purple}{1}}{0}
  \Big\rangle
  \,.
  \\[-11pt]
  \notag
\end{align}
\end{theorem}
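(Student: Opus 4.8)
The plan is to read off $\pi_1\big(\mathrm{Map}(\BrillouinTorus, S^2), \ChernNumber\big)$ from the Federer spectral sequence of the mapping space, $E_2^{p,q} = H^p\big(\BrillouinTorus; \pi_q(S^2)\big) \Rightarrow \pi_{q-p}\big(\mathrm{Map}(\BrillouinTorus,S^2), f\big)$, and then to promote its associated graded to the actual (non-abelian) group by means of the Larmore--Thomas theorem. Because $\pi_1(S^2)=0$, the only $E_2$-entries with $q-p=1$ are $E_2^{1,2} = H^1(\BrillouinTorus;\pi_2 S^2) = \mathbb{Z}^2$ and $E_2^{2,3} = H^2(\BrillouinTorus;\pi_3 S^2) = \mathbb{Z}$; the former is generated by the two torus $1$-cycles (the prospective $\WOperator{1}{0}, \WOperator{0}{1}$) and the latter by the Hopf class $\eta \in \pi_3 S^2$ (the prospective $\ZetaOperator$).

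First I would compute the one relevant differential, $d_2 : E_2^{0,2} = H^0(\BrillouinTorus;\pi_2 S^2) = \mathbb{Z} \longrightarrow E_2^{2,3} = H^2(\BrillouinTorus;\pi_3 S^2) = \mathbb{Z}$. This $d_2$ is the first $k$-invariant of $S^2$ --- equivalently the Whitehead square $[\iota_2,\iota_2] = 2\eta$ --- twisted by the base class $f$ of Chern number $\ChernNumber$, and is therefore multiplication by $2\ChernNumber$. Hence $E_\infty^{2,3} = \mathbb{Z}/2\ChernNumber$, giving the relation $\ZetaOperator^{2\ChernNumber}=1$, while $E_\infty^{1,2}=\mathbb{Z}^2$ survives (its $d_2$ lands in $H^3(\BrillouinTorus)=0$). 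As a check, for $X=S^2$ the term $E_2^{1,2}=H^1(S^2)$ vanishes and one recovers the classical $\pi_1\,\mathrm{Map}(S^2,S^2)_{\ChernNumber} \cong \mathbb{Z}/2\ChernNumber$ from the very same $d_2$.

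Next I would invoke Larmore--Thomas to upgrade the resulting extension $1 \to \mathbb{Z}/2\ChernNumber \to \pi_1 \to \mathbb{Z}^2 \to 1$ to a presentation: their description identifies it as central and computes its commutator pairing $\mathbb{Z}^2\times\mathbb{Z}^2 \to \mathbb{Z}/2\ChernNumber$ as the cup product $H^1 \times H^1 \to H^2(\BrillouinTorus;\mathbb{Z})=\mathbb{Z}$ post-composed with the same Whitehead bracket $\pi_2\otimes\pi_2 \to \pi_3$, evaluating on the two generators to $2\eta$. This is exactly $\WOperator{1}{0}\,\WOperator{0}{1} = \ZetaOperator^{2}\,\WOperator{0}{1}\,\WOperator{1}{0}$, yielding \eqref{TheHeisenbergGroupAlgebra}. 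Passing to the group algebra and decomposing into irreducibles, Schur's lemma makes the central $\ZetaOperator$ act by a scalar; the constraint $\ZetaOperator^{2\ChernNumber}=1$ forces a $2\ChernNumber$-th root of unity, and labelling the $|\ChernNumber|$ superselection sectors gives \eqref{TheRootOfUnity}.

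The stable case $N=\infty$ then drops out for free: since $\mathbb{C}P^\infty = K(\mathbb{Z},2)$ has $\pi_q=0$ for $q\geq 3$, the term $E_2^{2,3}$ and the differential $d_2$ both vanish, so $\ZetaOperator$ disappears and $\pi_1\,\mathrm{Map}(\BrillouinTorus,\mathbb{C}P^\infty) = H^1(\BrillouinTorus;\mathbb{Z})$ is abelian --- equivalently, the stabilization $\iota$ of \eqref{SequenceOfProjectiveSpaces} kills $\pi_3 S^2$ --- giving \eqref{TheCommutativeAlgebra}. The main obstacle is precisely the identification of the single topological input $[\iota_2,\iota_2]=2\eta$ as the common source of both the torsion order $2\ChernNumber$ (through the twisted $d_2$) and the commutator $\ZetaOperator^2$ (through the extension cocycle); the associated graded alone cannot detect that the extension is non-abelian, which is why the Larmore--Thomas refinement --- rather than a naive Hopf-degree count conflating $\pi^2$ with $H^2$ --- is indispensable here.
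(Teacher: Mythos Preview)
Your argument is correct and lands on the same key input as the paper --- the Larmore--Thomas theorem --- but the two presentations differ markedly in depth. The paper's proof is purely citational: it simply records that \eqref{TheHeisenbergGroupAlgebra} is Theorem~1 of Larmore--Thomas (1980), anticipated by Hansen (1974) up to the factor of $2$ and streamlined by Kallel (2001), and then invokes Schur's lemma for \eqref{TheRootOfUnity}; the stable case \eqref{TheCommutativeAlgebra} is dispatched by the Eilenberg--MacLane property of $\mathbb{C}P^\infty$. You instead reconstruct the mechanism: the Federer spectral sequence supplies the associated graded $\mathbb{Z}^2 \oplus \mathbb{Z}/2C$, with the torsion order arising transparently from the single differential $d_2$ governed by the Whitehead square $[\iota_2,\iota_2]=2\eta$ cupped against the degree-$C$ basepoint, and only then do you invoke Larmore--Thomas for the genuinely non-abelian datum (the commutator pairing), which the spectral sequence cannot see. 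This buys you an explanation of \emph{why} the exponent is $2C$ rather than $C$ --- the very point Hansen left open --- and makes the collapse in the stable case ($\pi_3\,\mathbb{C}P^\infty = 0$ kills both $E_2^{2,3}$ and the commutator source) structurally obvious rather than a separate remark. In effect you have sketched the content of the Kallel argument that the paper only cites.
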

\begin{proof}
  With the identification of \eqref{AdiabaticHolonomy} it is a question in pure algebraic topology to compute the fundamental groups of these mapping spaces. For \eqref{TheCommutativeAlgebra} this is elementary (cf. \cite[(15)]{SS25-FQH}), using that $\mathbb{C}P^\infty \simeq K(\mathbb{Z},2)$ is an Eilenberg-MacLane space. The computation for \eqref{TheHeisenbergGroupAlgebra} is much more intricate: It turns out to be a theorem of Larmore \& Thomas (1980) \cite[Thm. 1]{LarmoreThomas1980}, following Hansen (1974) \cite[Thm. 1]{Hansen1974} who obtained the result except for fixing the factor ``$2$'' in the exponent of $\ZetaOperator^{2C} = 1$. A more streamlined proof was later given by Kallel (2001) \cite[Prop. 1.5]{Kallel2001}. The observation that these purely algebro-topological results pertain to anyonic topological order of quantum Hall systems we recently highlighted in \cite[\S 3.3-4]{SS25-FQH}.

  With this, the statement \eqref{TheRootOfUnity} is a direct consequence of Schur's lemma, cf. \cite[(201)]{SS25-FQH}. (In the remaining case $\ChernNumber = 0$, a variant of \eqref{TheRootOfUnity} may be proven by invoking the modular equivariance expected of an anyon topological field theory, see \cite[\S 3.4]{SS25-FQH}.)
\end{proof}

But now we observe that the algebra \eqref{TheHeisenbergGroupAlgebra} is exactly the characteristic algebra of observables for anyons  on a torus (here: on the Brillouin torus of crystal momenta) with braiding phase $\zeta$:
This is due to \cite[(4.9)]{WenNiu1990}\cite[(4.14)]{IengoLechner1992}, reviewed in \cite[(4.21)]{Fradkin2013}\cite[(5.28)]{Tong2016}, cf. Fig. \ref{FigureTorusAlgebra}.

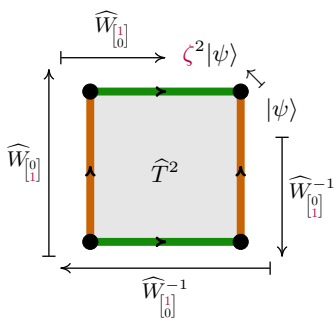
\begin{figure}[h]
\begin{tikzpicture}[
  >={
    Computer Modern Rightarrow[
      length=4pt, width=4pt
    ]
  }
]

\draw[
  fill=lightgray,
]
  (0,0) rectangle (2,2);

\node at (1,.8) {%
  \clap{\smash{$\BrillouinTorus$}}%
};

\draw[
  line width=3,
  color=darkorange
] 
  (0,0) --
  (0,2);
\draw[
  ->,
  line width=1
]
  (0,1) --
  (0,1.01);

\draw[
  line width=3,
  color=darkorange
] 
  (2,0) --
  (2,2);
\draw[
  ->,
  line width=1
]
  (2,1) --
  (2,1.01);

\draw[
  line width=3,
  color=darkgreen
] 
  (0,2) --
  (2,2);
\draw[
  ->,
  line width=1
]
  (1,2) --
  (1.01,2);

\draw[
  line width=3,
  color=darkgreen
] 
  (0,0) --
  (2,0);
\draw[
  ->,
  line width=1
]
  (1,0) --
  (1.01,0);

\draw[
  fill=black
] 
(0,0) circle (.1);
\draw[
  fill=black
] 
(2,0) circle (.1);
\draw[
  fill=black
] 
(0,2) circle (.1);
(2,0) circle (.1);
\draw[
  fill=black
] 
(2,2) circle (.1);

\node at (2-.4,2+.5) {
  $\mathcolor{purple}{\zeta}^2\vert \psi \rangle$
};

\node[
  rotate=+135  
] at (2+.2, 2+.2) {%
  \clap{$\mapsto$}%
};

\node at (2.55,2-.25) {
  $\vert \psi \rangle$
};

\draw[
  line width=.5,
  |->,
]
  (2.55, 1.4) --
  node[
    xshift=11.5pt,
    scale=.9
  ] {$
    \WOperator{0}{\mathcolor{purple}{1}}
      ^{-1}
  $}
  (2.55,-.2);

\draw[
  line width=.5,
  |->,
]
  (2.4, -.35) --
  node[
    yshift=-11pt,
    scale=.9
  ]{$
    \WOperator{\mathcolor{purple}{1}}{0}^{-1}
  $}
  (-.4, -.35);

\draw[
  line width=.5,
  |->,
]
  (-.55, -.2) --
  node[
    xshift=-9pt,
    scale=.9
  ] {$
    \WOperator{0}{\mathcolor{purple}{1}}
  $}
  (-.55,2.3);

\draw[
  line width=.5,
  |->,
]
  (-.4, 2.45) --
  node[
    scale=.9,
    yshift=12pt
  ]{$
    \WOperator{\mathcolor{purple}{1}}{0}
  $}
  (1, 2.45);
  
\end{tikzpicture}
\caption{
The relation in the algebra \eqref{TheHeisenbergGroupAlgebra} says that an adiabatic parameter loop around a basis of torus cycles induces transformation of the topological states by multiplication with the square of a complex phase
$\zeta$ to be identified \cite[Prop. 3.21]{SS25-FQH} with an anyon braiding phase (cf. Fig. \ref{FigureA}).
}
\label{FigureTorusAlgebra}
\end{figure}

In particular, from this observable algebra \eqref{TheHeisenbergGroupAlgebra} a variant of the Stone-von Neumann theorem theorem implies \cite[\S 3.4]{SS25-FQH} that the superselection sectors of $\HilbertSpace_C$ have dimension $\propto \mathrm{ord}(\zeta)$, thus exhibiting the ground state degeneracy characteristic of topological order.

\section{Conclusion and Outlook}

Highlighting that topological order is sensitive not just to the connected components $\pi_0$ of the moduli space of band topologies, but crucially to its fundamental groups $\pi_1$ (where the commonly conflated \emph{fragile} and \emph{stable} band topologies of Chern insulators may differ substantially) we have invoked mathematical results from algebraic topology to derive (predict) how, when the fragile topology of FQAH systems can be resolved in \emph{cohomotopy} $\pi^2$, then topological order becomes detectable with braiding phases of anyons localized over the reciprocal space of crystal momenta.
Curiously, this makes FQAH anyons be momentum-space duals of similarly cohomotopically classified anyons in FQH systems \cite{SS25-FQH}.

But realistic (fractional) Chern insulators of interested are typically subject to \emph{crystalline symmetry}, forcing the Bloch Hamiltonians \eqref{BlochHamiltonian} to transform under certain crystal point group symmetries $g \in G \acts \BrillouinTorus$ as (cf. \cite[\S 4.1]{Stanescu2020})
\begin{equation}
  \label{SymmetryOnBlochHamiltonian}
  H_{\mathrm{Blch}}(g \cdot \vec k) \,=\,  
  U_g(\vec k) 
   \circ
    H_{\mathrm{Blch}}(\vec k) 
   \circ
  U_{g^{-1}}(\vec k)
  \,,
\end{equation}
for compatible unitary operators
$$
  \begin{tikzcd}[sep=0pt]
    \BrillouinTorus \times G
    \ar[
      rr,
      "{  }"
    ]
    &&
    \mathrm{SU}(2)
    \\
    (\vec k,g) &\longmapsto&
    U_g(\vec k)
    \,.
  \end{tikzcd}
$$
In view of \eqref{TheMapToTheSphere} this means equivalently that the map characterizing the valence bundle is actually \emph{$G$-equivariant}
\begin{equation}
  \label{EquivarianceOfClassifyingMap}
  \begin{tikzcd}
    \BrillouinTorus
    \ar[
      in=60,
      out=180-60,
      looseness=5,
      shift right=4pt,
      "{ G \subset \mathrm{Pin}(2) }"{
        description
      }
    ]
    \ar[
      rr,
      "{
        \vec h / {\vert H \vert}
      }"
    ]
    &&
    \mathcolor{white}{\widehat{\mathcolor{black}{S^2}}}
    \smash{\mathrlap{\,.}}
    \ar[
      in=60,
      out=180-60,
      looseness=4.5,
      shift right=5pt,
      "{ G \subset \mathrm{Spin}(3) }"{
        description
      }
    ]
  \end{tikzcd}
\end{equation}
We may observe that, 
if and when the relevant deformations are all constrained to preserve this symmetry --- whence one speaks of the topological phases being \emph{symmetry-protected} (cf. \cite{Chiu2016}\cite[\S 2.3]{SS22-Ord})  --- the fragile topological moduli space of the system shrinks to the subspace $\mathrm{Map}(-,-)^{G} \subset \mathrm{Map}(-,-)$ of $G$-equivariant maps \eqref{EquivarianceOfClassifyingMap}, whose connected components $C_G$ 
in \emph{$G$-equivariant cohomotopy} \cite[Ex. 4.5.8]{SS25-Bund}\cite[\S 6]{SS26-Orb}\cite{SS20-Tad}
\begin{equation}
  \pi_0
  \Big(
  \mathrm{Map}\big(
    \BrillouinTorus
    ,\,
    S^2
  \big)^{\!G}
  \Big)
  \;\;
  =:
  \;\;
  \pi^2_G\big(
    \BrillouinTorus
  \big)
\end{equation}
measure the \emph{fragile symmetry-protected band topology}.

These fragile symmetry-protected topological phases have not been considered before. We highlight that there is still a stabilization map (cf. \cite[Prop. 4.5.17, Rem. 4.5.18]{SS25-Bund}) which coarsens and hence compares them to the classes in $G$-\emph{equivariant K-theory} that did find much attention in this context (cf. \cite{Stehouwer2021}).

What is more, in view of Thm. \ref{DerivingTorusAlgebra} we immediately obtain a precise formula for the algebra of topological Berry phases to be expected in symmetry-protected FQAH systems in topological phases $C_G$:
\begin{equation}
  \label{EquivariantObservables}
  G\mathrm{Obs}_{C_G}
  \;\;
  :=
  \;\;
  \mathbb{C}\bigg[
  \pi_1
  \Big(
    \mathrm{Map}\big(
      \BrillouinTorus
      ,\,
      S^3
    \big)^{\!G}
    ,\,
    C_G
  \Big)
  \bigg]
  \,.
\end{equation}
This predicts anyonic topologically ordered FQAH phases whenever the $G$-symmetry \eqref{SymmetryOnBlochHamiltonian} is such that \eqref{EquivariantObservables} is non-abelian, 

While there do not seem to be results available computing these equivariant cohomotopy algebras 
\eqref{EquivariantObservables} for non-trivial $G$-actions, in generalization of Thm. \ref{DerivingTorusAlgebra}, it is already noteworthy that hereby the problem of describing $G$-protected topological order in FQAH systems is reduced to a straightforward question in pure algebraic topology which may be handed to specialists for further investigation (cf. the end of \cite[\S 6]{SS25-Srni}).


\begin{acknowledgements}
For useful discussion we thank 
Adrien Bouhon,
Sadok Kallel, and
Robert-Jan Slager.

This research was supported by \emph{Tamkeen} under the 
\emph{NYU Abu Dhabi Research Institute grant} {\tt CG008}.
\end{acknowledgements}

\bibliography{references}

\end{document}